\newcommand{\om}{{\vec{\omega}}}
\newcommand{\R}{{\vec{\rm R}}}
\newcommand{\xv}{{\vec{\rm x}}}
\newcommand{\ka}{{\vec{\rm k}}}
\newcommand{\q}{{\vec{\rm q}}}
\newcommand{\vs}{{\vec{\rm v}}}
\newcommand{\n}{{\vec{\rm n}}}
\newcommand{\jv}{\vec{\rm J}}
\newcommand%
{\MVV}%
[1]%
{{\langle\!\langle #1\rangle\!\rangle}}%
\newcommand%
{\MMV}%
[1]%
{{\left\langle\!\!\!\left\langle #1\right\rangle\!\!\!\right\rangle}}%
\newcommand{\bea}{\begin{eqnarray}}
\newcommand{\eea}{\end{eqnarray}}
\def\bldmth#1{%
\mathchoice
{{\hbox{\boldmath$\displaystyle#1$\unboldmath}}}%
{{\hbox{\boldmath$\textstyle#1$\unboldmath}}}%
{{\hbox{\boldmath$\scriptstyle#1$\unboldmath}}}%
{{\hbox{\boldmath$\scriptscriptstyle#1$\unboldmath}}}%
}
\def\vec#1{\bldmth{#1}}
\begin{document}

\markboth{S.~E.~Korenblit and A.~V.~Sinitskaya}
{On asymptotic power corrections to differential fluxes and   
generalization of optical theorem for potential scattering}

\catchline{}{}{}{}{}

\title{On asymptotic power corrections to differential fluxes and 
generalization of optical theorem for potential scattering}

\author{\footnotesize S.~E.~Korenblit${}^*$ and A.~V.~Sinitskaya${}^\dagger$}

\address{
${}^*$ Department of Physics, Irkutsk State University, \\
20 Gagarin blvd, Irkutsk 664003, Russia \\
${}^\dagger$ Irkutsk National Research Technical University, \\
83 Lermontov str., Irkutsk 664074, Russia \\
${}^*$korenb@ic.isu.ru  $ \quad {}^\dagger$lasalvia@mail.ru
}

\maketitle

\begin{abstract}
In a wide class of potentials the exact asymptotic dependence on finite distance $R$ 
from scattering center is established for outgoing differential flux. 
It is shown how this dependence is eliminated by integration over solid angle 
for total flux, unitarity relation, and optical theorem. 
Thus, their applicability domain extends naturally to the finite $R$.

\keywords{asymptotic expansion; differential fluxes; cross-sections; 
unitarity relation.}
\end{abstract}

\ccode{PACS Nos.:  03.65.-w, 03.65.Bz, 03.65.Nk, 11.80.Et,  11.80.Fv}

\section{Introduction}
According to the common rules, \cite{GW,S,T,Ng,AR} the differential cross-section 
$d\sigma$ for the scattering on hermitian scalar spherically symmetric potential $U(R)$ is 
uniquely defined by on-shell scattering amplitudes $f^\pm(\q;\ka)$. These amplitudes 
are defined as coefficients at outgoing or incoming spherical waves being the first order 
terms of asymptotic expansion of the scattering wave functions $\Psi^\pm_\ka(\R)$ for  
$R=|\R|\to\infty$, $\R=R\n$, $\q=k\n$, $\ka=k\om$, $\n^2=\om^2=1$: 
\begin{eqnarray}
&&\!\!\!\!\!\!\!\!\!\!\!\!\!\!\!\!\!\!\!\!\!\!\!\! 
\Psi^\pm_\ka(\R)
\underset{R\to\infty}{\longrightarrow} 
e^{i(\ka\cdot\R)} + f^\pm(\q;\ka)\frac{e^{\pm ikR}}R+O(R^{-2}),
\label{1_} \\
&&\!\!\!\!\!\!\!\!\!\!\!\!\!\!\!\!\!\!\!\!\!\!\!\! 
\Psi^-_\ka(\R)=\left(\Psi^+_{-\ka}(\R)\right)^*, \quad 
f^-(\q;\ka)=\left(f^+(\q;-\ka)\right)^*,
\label{2_} \\
&&\!\!\!\!\!\!\!\!\!\!\!\!\!\!\!\!\!\!\!\!\!\!\!\! 
d\sigma =\left|f^+(\q;\ka)\right|^2 d\Omega(\n),\;\mbox{ where: }\;  
\sigma =\int \left|f^+(k\n;k\om)\right|^2 d\Omega(\n),
\label{3_}
\end{eqnarray}
is the respective total (elastic) cross-section, which also does not depend on $R$. 
Of course the terms of order $O(R^{-2})$ in Eq. (\ref{1_}) are unimportant 
\cite{GW,S,T,Ng,AR} for both definitions (\ref{3_}). However, $R$ is finite for real 
experiments, and the recent investigations \cite{N_shk,NN_shk,k_t3,anom} of (anti-) 
neutrino processes at short distances from the source reveal a possible violation of 
inverse-square law for event rate corresponding \cite{N_shk,NN_shk} to (\ref{1_}) and  
(\ref{3_}). 
Since the macroscopic parameter of distance $R$ has very peculiar meaning when it is 
considered in the framework of quantum field theory \cite{N_shk,NN_shk,k_t3,Grimus}, 
it seems natural and convenient to elucidate this problem at first for nonrelativistic 
quantum-mechanical scattering. 

In the following sections the closed formula and recurrent relation for coefficients of 
asymptotic expansion of wave function $\Psi^\pm_\ka(\R)$ in all orders of $R^{-s}$ are   
obtained in terms of the on-shell scattering amplitudes $f^\pm(\q;\ka)$ only. This 
expansion together with obtained exact asymptotic expression for interference fluxes  
reveals for finite $R$ the necessity to replace the differential cross-section 
(\ref{3_}) by the normalized outgoing differential flux. 
Nevertheless, the second definition of Eq. (\ref{3_}) for total cross-section, which thus is 
replaced by total outgoing flux, remains unchanged together with the unitarity relation and 
the optical theorem, as all their asymptotic power corrections precisely disappear.

\section{Asymptotic expansion of scattering wave function}

To show the nature of asymptotic expansion we have to recall some properties 
\cite{GW,S,T,Ng,AR} of wave functions and amplitudes (\ref{2_}). The function  
$\Psi^\pm_\ka(\R)$ (\ref{1_}), being solution of Schr\"odinger equation for the energy 
$E>0$, satisfies Lippman-Schwinger equation:  
\begin{eqnarray}
&&\!\!\!\!\!\!\!\!\!\!\!\!\!\!\!\!\!\!\!
\left(\vec{\nabla}^2_{\R}+k^2\right)\Psi^\pm_\ka(\R)=V(R)\Psi^\pm_\ka(\R),
\,\mbox { for: }\, k^2=\frac{2M}{\hbar^2}E, \quad V(R)=\frac{2M}{\hbar^ 2}U(R),
\label{4_} \\
&&\!\!\!\!\!\!\!\!\!\!\!\!\!\!\!\!\!\!\! 
\Psi^\pm_\ka(\R)=e^{i(\ka\cdot\R)}-
\int\! d^3{\rm x}\,\frac{e^{\pm ik|\vec{\rm R}-\xv|}}{4\pi|\vec{\rm R}-\xv|}\,
V(|\xv|)\Psi^\pm_\ka(\xv)\equiv e^{i(\ka\cdot\R)}+{\cal J}^\pm_\ka(\R). 
\label{5_} 
\end{eqnarray}
Here the differential vector-operator and the operator of angular momentum square in 
the spherical basis $\n,\vec{\eta}_\vartheta,\vec{\eta}_\varphi$ have the following 
properties for $\vec{\rm R}=R\n$, 
\begin{eqnarray}
&&\!\!\!\!\!\!\!\!\!\!\!\!\!\!\!\!\!\!\!\!\!\!\!\! 
\n=(\sin\vartheta\cos\varphi,\sin\vartheta\sin\varphi,\cos\vartheta),\quad 
\vec{\eta}_\vartheta=\partial_\vartheta\n,\quad 
\vec{\eta}_\varphi \sin\vartheta =\partial_\varphi\n :  
\label{4} \\
&&\!\!\!\!\!\!\!\!\!\!\!\!\!\!\!\!\!\!\!\!\!\!\!\! 
\vec{\nabla}_{\vec{\rm R}}=\n\partial_R+\frac 1R \vec{\partial}_{\n}, \quad 
(\n\cdot\vec{\nabla}_{\vec{\rm R}})=\partial_R, \quad 
\vec{\partial}_{\n}=\vec{\eta}_\vartheta\partial_\vartheta+
\frac{\vec{\eta}_\varphi}{\sin\vartheta}\partial_\varphi, 
\label{3} \\
&&\!\!\!\!\!\!\!\!\!\!\!\!\!\!\!\!\!\!\!\!\!\!\!\! 
(\n\cdot\vec{\partial}_{\n})=0,\quad 
(\vec{\partial}_{\n}\cdot\n)=2,\quad 
(\n\times\vec{\partial}_{\n})^2=\vec{\partial}^2_{\n},\quad 
(\n\times\vec{\partial}_{\n})=i\vec{L}_{\n},
\label{5} \\
&&\!\!\!\!\!\!\!\!\!\!\!\!\!\!\!\!\!\!\!\!\!\!\!\! 
-\vec{\partial}^2_{\n}=\vec{L}^2_{\n}=
2R(\n\cdot\vec{\nabla}_{\vec{\rm R}})+R^2\left(
(\n\cdot\vec{\nabla}_{\vec{\rm R}})^2-\vec{\nabla}^2_{\vec{\rm R}}\right),\;
\mbox{ whence,}
\label{6} \\
&&\!\!\!\!\!\!\!\!\!\!\!\!\!\!\!\!\!\!\!\!\!\!\!\! 
\mbox{for }\;\cos\vartheta=c:\;\;
{\cal L}_{\n} \equiv \vec{L}^2_{\n}=
-\left[\partial_c(1-c^2)\partial_c+(1-c^2)^{-1}\partial^2_\varphi\right], 
\label{7}
\end{eqnarray} 
and the well-known representation also is used for arriving from point $\xv$ to point 
$\vec{\rm R}$ spherical wave being free 3-dimensional Green function: \cite{GW,S,T,Ng,AR} 
\begin{equation}
\frac{e^{\pm ik|\vec{\rm R}-\xv|} }{4\pi|\vec{\rm R}-\xv|}= 
\int\!\frac{d^3{\rm q}}{(2\pi)^3}\frac{e^{i(\q\cdot(\vec{\rm R}-\xv))}}
{(\q^2-k^2\mp i0)}.
\label{8}
\end{equation}
When $\xv=0$ it satisfies the well-known inhomogeneous equation: 
\begin{equation}
\left(\vec{\nabla}^2_{\vec{\rm R}}+k^2\right)
\frac{e^{\pm ik R}}{4\pi R}=-\,\delta_3(\vec{\rm R}).  
\label{9}
\end{equation}
Then the power index $\pm ikR$ is defined in the sense of analytic continuation 
with a small real negative admixture: \cite{Ng,AR} 
$\pm ik\mapsto -(-k^2\mp i0)^{1/2}=\pm ik -0$, which is almost nowhere written but 
is everywhere assumed. The following Lemma is in order. 

\begin {lemma} 
When $\vec{\rm R}=R\n$, $\xv=r\vs$, 
$\vs=(\sin\beta\cos\alpha, \sin\beta \sin\alpha,\cos\beta)$, 
$|\xv|=r<R$ and operator ${\cal L}_{\n}=\vec{L}^2_{\n}$ (or $\n\mapsto\vs$) is defined by 
Eqs. (\ref{3})--(\ref{7}) with positively defined operator 
${\cal L}_{\n}+\frac 14=(\Lambda_{\n}+\frac 12)^2$ such that 
$\Lambda_{\n}+\frac 12=\sqrt{{\cal L}_{\n}+\frac 14}$ is also positively defined, then: 
\begin{eqnarray}
&&\!\!\!\!\!\!\!\!\!\!\!\!\!\!\!\!\!\!\!\!\!\!\!\! 
\frac{e^{\pm ik|\vec{\rm R}-\xv|}}{4\pi|\vec{\rm R}-\xv|}=
\frac{\chi_{\Lambda_{\n}}(\mp ikR+0)}{4\pi R}e^{\mp ik(\n\cdot\xv)}\sim
\label{18_0} \\
&&\!\!\!\!\!\!\!\!\!\!\!\!\!\!\!\!\!\!\!\!\!\!\!\! 
\sim\frac{e^{\pm ikR}}{4\pi R} \left\{1+\sum^\infty_{s=1}\frac{ \displaystyle 
\prod\limits^s_{\mu=1}\left[{\cal L}_{\n}-\mu(\mu-1)\right]}
{s!(\mp 2ikR)^s}\right\} e^{\mp ik(\n\cdot\xv)}. 
\label{18} 
\end{eqnarray}
\end {lemma}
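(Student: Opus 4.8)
The plan is to reduce the operator identity (\ref{18_0}) to a family of scalar identities, one per eigenvalue of ${\cal L}_{\n}$, and then resum. Since $\chi_{\Lambda_{\n}}(\mp ikR+0)$ is by construction a function of the single operator $\Lambda_{\n}$ (equivalently of ${\cal L}_{\n}=\vec{L}^2_{\n}$) and of the scalar $R$, it is diagonal in any basis diagonalizing ${\cal L}_{\n}$. The natural such basis is furnished by the Rayleigh expansion of the factor standing to its right,
\begin{equation}
e^{\mp ik(\n\cdot\xv)}=\sum_{\ell=0}^\infty(2\ell+1)(\mp i)^\ell\,j_\ell(kr)\,P_\ell(\n\cdot\vs),
\end{equation}
because $P_\ell(\n\cdot\vs)$, viewed as a function of $\n$, obeys ${\cal L}_{\n}P_\ell(\n\cdot\vs)=\ell(\ell+1)P_\ell(\n\cdot\vs)$. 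The positive-definiteness assumption on ${\cal L}_{\n}+\frac14=(\Lambda_{\n}+\frac12)^2$ is precisely what fixes the branch of the square root, yielding the nonnegative integer eigenvalue $\Lambda_{\n}P_\ell=\ell\,P_\ell$, so that $\chi_{\Lambda_{\n}}$ multiplies the $\ell$-th partial wave by a scalar $\chi_\ell(\mp ikR)$.

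Next I would invoke the classical multipole (addition) theorem for the free Green function \cite{GW,S,T,Ng,AR}, valid exactly in the regime $r<R$ assumed here,
\begin{equation}
\frac{e^{\pm ik|\R-\xv|}}{4\pi|\R-\xv|}=\frac{\pm ik}{4\pi}\sum_{\ell=0}^\infty(2\ell+1)\,j_\ell(kr)\,h^{(\pm)}_\ell(kR)\,P_\ell(\n\cdot\vs),
\end{equation}
with $h^{(\pm)}_\ell$ the spherical Hankel functions, the $\mp i0$ prescription of Eq. (\ref{9}) selecting their outgoing/incoming character. Matching this, coefficient by coefficient in $P_\ell(\n\cdot\vs)$, against the right-hand side of (\ref{18_0}) after inserting the expansion above, collapses the whole identity (\ref{18_0}) into the scalar statement $\chi_\ell(\mp ikR)=(\pm i)^{\ell+1}\,kR\,h^{(\pm)}_\ell(kR)$, i.e. $\chi_\ell$ is, up to a phase, the Riccati--Hankel function. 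This simultaneously pins down the definition of $\chi_{\Lambda_{\n}}$ and establishes (\ref{18_0}); the constraint $r<R$ secures absolute convergence of the multipole series and thereby legitimizes interchanging the operator function with the sum over $\ell$.

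To reach the asymptotic form (\ref{18}) I would substitute the exact terminating representation of the Hankel function,
\begin{equation}
h^{(\pm)}_\ell(z)=(\mp i)^{\ell+1}\frac{e^{\pm iz}}{z}\sum_{m=0}^{\ell}\frac{(\ell+m)!}{m!\,(\ell-m)!}\frac{(\pm i)^m}{(2z)^m},
\end{equation}
into the scalar relation just found; the phase prefactors cancel and leave $\chi_\ell(\mp ikR)=e^{\pm ikR}\sum_{m=0}^{\ell}\frac{(\ell+m)!}{m!\,(\ell-m)!}(\mp 2ikR)^{-m}$. The one genuine computation is the combinatorial identity
\begin{equation}
\frac{(\ell+m)!}{m!\,(\ell-m)!}=\frac{1}{m!}\prod_{\mu=1}^{m}\bigl[\ell(\ell+1)-\mu(\mu-1)\bigr],
\end{equation}
which follows from the factorization $\ell(\ell+1)-\mu(\mu-1)=(\ell+\mu)(\ell-\mu+1)$ and from telescoping the two resulting products into $(\ell+m)!/\ell!$ and $\ell!/(\ell-m)!$. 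Promoting $\ell(\ell+1)\mapsto{\cal L}_{\n}$ turns each eigenvalue equation into the operator series (\ref{18}). A bonus is that the factor at $\mu=\ell+1$ vanishes, so the nominally infinite sum in (\ref{18}) terminates at $m=\ell$ on every eigenspace, which is why the displayed expansion is exact on each partial wave rather than merely formal.

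The main obstacle I anticipate is not any single step but the precise status of (\ref{18}) as an operator statement. On a fixed eigenspace the series truncates and everything is rigorous, yet the truncation length grows with $\ell$, so the double series in $m$ and $\ell$ is not uniformly convergent; (\ref{18}) must therefore be read as an asymptotic expansion in $R^{-1}$, order by order, which is exactly what the symbol $\sim$ encodes. Controlling the interchange of the operator function $\chi_{\Lambda_{\n}}(\mp ikR+0)$ with the partial-wave sum, and checking that the $+0$ (equivalently $\pm ik\mapsto\pm ik-0$) prescription makes $e^{\pm ikR}$ decay so that each asymptotic coefficient is well defined, is the delicate bookkeeping. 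As an independent corroboration I would verify that $\chi_{\Lambda_{\n}}(\mp ikR)/R$ satisfies the operator radial Helmholtz equation $\bigl[\partial^2_R+\frac{2}{R}\partial_R+k^2-{\cal L}_{\n}/R^2\bigr]\chi_{\Lambda_{\n}}(\mp ikR)/R=0$ implied by Eqs. (\ref{6})--(\ref{7}), which both sides of (\ref{18_0}) must obey for $\R\neq\xv$.
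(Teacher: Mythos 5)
Your proposal is correct and follows essentially the same route as the paper: the multipole/addition theorem for the free Green function together with the Rayleigh expansion of the plane wave, identification of $\chi_\ell$ with the (Riccati--Hankel type) radial solutions, the combinatorial identity $\frac{(\ell+s)!}{(\ell-s)!}=\prod_{\mu=1}^{s}[\ell(\ell+1)-\mu(\mu-1)]$, and the promotion $\ell(\ell+1)\mapsto{\cal L}_{\n}$ via the eigenvalue equation on Legendre polynomials. The only cosmetic difference is that the paper phrases the radial functions through Macdonald functions $K_{\ell+1/2}$ (whose analyticity in $\lambda^2$ is its justification that only ${\cal L}_{\n}$, never $\Lambda_{\n}$ itself, appears) and appeals to the known infinite asymptotic series of $K_\lambda$ for non-integer order to explain the $\sim$ in the final expansion, whereas you reach the same conclusion by tracking the $\ell$-dependent truncation of the terminating sums.
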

\begin {proof} 
The expression (\ref{18_0}) for $R>r$ is a formal operator rewriting of the 
usual multipole expansion of free Green function \cite{T} (\ref{8}) with the help 
of self-adjoint operator formally introduced instead of $l$: $l\mapsto\Lambda_{\n}$  
but never really arising and with the help of multipole expansion of plane wave, \cite{T} 
that are listed also in Ref.\cite{b_e2}, formulae (8.533), (8.534): 
\begin{eqnarray}
&&\!\!\!\!\!\!\!\!\!\!\!\!\!\!\!\!\!\!\!\!\!\!\!\! 
\frac{e^{\pm ik|\vec{\rm R}-\xv|}}{4\pi|\vec{\rm R}-\xv|}=
\frac{1}{4\pi kRr}\sum^\infty_{l=0}i^{\mp l}\chi_l(\mp ikR+0)\,\psi_{l\,0}(kr)
(2l+1)P_l\left((\n\cdot\vs)\right),
\label{19} \\
&&\!\!\!\!\!\!\!\!\!\!\!\!\!\!\!\!\!\!\!\!\!\!\!\! 
e^{\mp ik(\n\cdot\xv)}=
\frac{1}{kr}\sum^\infty_{l=0}i^{\mp l}\,\psi_{l\,0}(kr)
(2l+1)P_l\left((\n\cdot\vs)\right),
\label{20} 
\end{eqnarray} 
Here the spherical functions $Y^m_l(\n)=\langle\n|l\,m\rangle$ and Legendre 
polynomials $P_l(c)$ being eigenfunctions of self-adjoint operator (\ref{7}) on the unit 
sphere for $c=(\n\cdot\vs)$ or $c=\cos\vartheta$ satisfy the well-known orthogonality, 
parity, completeness and other conditions \cite{GW,S,T,Ng,AR,vil,b_l} 
(\ref{21}) -- (\ref{A_5}) with the delta-function $\delta_\Omega(\n,\,\vs)$ on the unit 
sphere. 

The solutions $\chi_l(\mp ikr)$, $\psi_{l\,0}(kr)$ of free radial Schr\"odinger equation:  
\begin{equation}
\left[r^2\left(\frac 1r\,\partial^2_r\,r+k^2\right)\right]\frac{\psi_{l\,0}(kr)}{r}=
l(l+1)\frac{\psi_{l\,0}(kr)}{r},  
\label{26} 
\end {equation}
are defined by Macdonald $K_\lambda(z)$ and Bessel $J_\lambda(y)$ functions 
\cite{b_e2,olv,vil,b_l} (\ref{A_1}) -- (\ref{A_10}) that for integer 
$l$, i.e. half integer $\lambda=l+\frac 12$ are reduced to elementary functions:  
\begin{equation}
\chi_l(b R)=\left(\frac{2bR}{\pi}\right)^{1/2}\!\!K_{l+{\frac 12}}(bR), \quad \;\;
\chi_l(b R)\! \underset{l=\mathrm{int}}{=\Longrightarrow}
e^{-b R} \sum^l_{s=0}\frac{(l+s)!}{s!(l-s)!(2b R)^s}. 
\label{24} 
\end {equation}
The function $K_\lambda(z)$ (\ref{A_1}) is entire function \cite{olv,b_e2} of $\lambda^2$. 
This is the reason the well-defined operator $\Lambda_{\n}$ introduced in Eq. (\ref{18_0}) 
does not appear explicitly. 

The expansion (\ref{18}) for large $R$ is the known asymptotic expansion of function 
(\ref{18_0}), being infinite asymptotic version \cite{b_e2,olv} of the sum (\ref{24}) 
for arbitrary non-integer $l$, $|{\rm arg}(bR)|<3\pi/2$, is supplemented by observation 
\cite{b_e2,olv} for the product:
\begin{equation}
\frac{(l+s)!}{(l-s)!}=\prod\limits^s_{\mu=1}(l-\mu+1)(l+\mu)=
\prod\limits^s_{\mu=1}\left[l(l+1)-\mu(\mu-1)\right].
\label{28}
\end{equation}
Due to (\ref{21}) it may be factored out \cite{k_t3} from the sum over $l$ (\ref{19}) as 
operator product in the right hand side of Eq. (\ref{18}), thus converting this sum into 
the expansion (\ref{18}). 
\end {proof} 

\noindent
{\sl Remark.} The operator ${\cal L}_{\n}$ in Eq. (\ref{18}) may be replaced by 
operator in square brackets of the left hand side of Eq. (\ref{26}) or by the similar 
operator with interchange of $r\rightleftharpoons k$ with the same result. 

\begin {theorem}
Let the potential $V(r)$ have finite first absolute moment and decrease at 
$r\to\infty$ faster than any power of $1/r$.   
Then the integral ${\cal J}^\pm_\ka(\vec{\rm R})$ in Eq. (\ref{5_}) for sufficiently 
large $R$ admits asymptotic power expansion whose coefficients are defined by the 
on-shell scattering amplitudes $f^\pm(\q;\ka)$ only. 
This expansion has asymptotic sense \cite{olv} even though the potential $V(r)$ in 
Eq. (\ref{4_}) has a finite support: 
\begin{eqnarray}
&&\!\!\!\!\!\!\!\!\!\!\!\!\!\!\!\!\!\!\!\!\!\!\!\! 
{\cal J}^\pm_\ka(\vec{\rm R})\sim \frac{e^{\pm ikR}}{R}\left\{f^\pm(k\n;\ka)
+\sum^\infty_{s=1}\frac{h^\pm_s(k\n;\ka)}{(\mp 2ikR)^s}\right\}, 
\label{29} \\
&&\!\!\!\!\!\!\!\!\!\!\!\!\!\!\!\!\!\!\!\!\!\!\!\! 
\mbox{with: }\;
h^\pm_s(k\n;\ka)=
\frac{1}{s!}\prod\limits^s_{\mu=1}\left[{\cal L}_{\n}-\mu(\mu-1)\right]f^\pm(k\n;\ka), 
\label{30} \\
&&\!\!\!\!\!\!\!\!\!\!\!\!\!\!\!\!\!\!\!\!\!\!\!\! 
\mbox{or: }\;
h^\pm_s(k\n;\ka)=\frac{{\cal L}_{\n}-s(s-1)}{s} h^\pm_{s-1}(k\n;\ka), \quad 
\ka=k\om,  
\label{30_0}
\end{eqnarray}
and is equivalent to infinite reordering of its asymptotic multipole 
expansion: \cite{T,Ng,AR}
\begin{eqnarray}
&&\!\!\!\!\!\!\!\!\!\!\!\!\!\!\!\!\!\!\!\!\!\!\!\! 
{\cal J}^\pm_\ka(\vec{\rm R})\simeq \frac{1}{R}
\sum^\infty_{j=0}\chi_j(\mp ikR)
(2j+1)\eta^\pm_j(k)P_j\left(\pm(\n\cdot\om)\right),
\label{31} \\
&&\!\!\!\!\!\!\!\!\!\!\!\!\!\!\!\!\!\!\!\!\!\!\!\! 
\mbox{with: }\;
h^\pm_s(k\n;\ka)=
\frac{1}{s!}\sum^\infty_{j=s}\frac{(j+s)!}{(j-s)!}
(2j+1)\eta^\pm_j(k)P_j\left(\pm(\n\cdot\om)\right),
\label{32} \\
&&\!\!\!\!\!\!\!\!\!\!\!\!\!\!\!\!\!\!\!\!\!\!\!\! 
\mbox{and: }\; 
f^\pm(k\n;\ka)=h^\pm_0(k\n;\ka)=
\sum^\infty_{j=0}(2j+1)\eta^\pm_j(k)P_j\left(\pm(\n\cdot\om)\right), 
\label{32_1} \\
&&\!\!\!\!\!\!\!\!\!\!\!\!\!\!\!\!\!\!\!\!\!\!\!\! 
\mbox{for: }\; 
f^\pm(k\n;\ka)=
-\,\frac{1}{4\pi}\int\! d^3{\rm x}\,e^{\mp ik(\n\cdot\xv)}V(|\xv|)\Psi^\pm_\ka(\xv), 
\quad \xv=r\vs,
\label{32_2} 
\end{eqnarray}
as the usual on-shell scattering amplitude \cite{GW,S,T,Ng,AR}.
\end {theorem}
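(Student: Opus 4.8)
The plan is to feed the operator form of the Green-function expansion supplied by the Lemma directly into the integral ${\cal J}^\pm_\ka(\R)$ of Eq.~(\ref{5_}), exploiting that the angular operator ${\cal L}_\n$ differentiates only the observation direction $\n$ and therefore commutes both with the integration over $\xv=r\vs$ and with every function of the radius $R$ alone. The expansion (\ref{18}) applies for $r<R$, which covers the whole integration region once $R$ exceeds the range of $V$; for a merely rapidly decreasing $V$ the complementary region $r>R$ contributes beyond all orders by hypothesis. Substituting the kernel (\ref{18}) into (\ref{5_}), the radial prefactor $e^{\pm ikR}/(4\pi R)$ and each power $(\mp 2ikR)^{-s}$ pass outside the $d^3x$ integral, while the operator product $\prod_{\mu=1}^s[{\cal L}_\n-\mu(\mu-1)]$ may be pulled in front of the integral because its only target, $e^{\mp ik(\n\cdot\xv)}$, is the sole remaining $\n$-dependent factor. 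The leftover integral against $e^{\mp ik(\n\cdot\xv)}V(|\xv|)\Psi^\pm_\ka(\xv)$ is, by the definition (\ref{32_2}), precisely $-4\pi f^\pm(k\n;\ka)$; hence the leading term reproduces $e^{\pm ikR}f^\pm/R$ and each higher term produces $\frac{1}{s!}\prod_{\mu=1}^s[{\cal L}_\n-\mu(\mu-1)]f^\pm$, which is exactly the coefficient $h^\pm_s$ of Eq.~(\ref{30}). This makes manifest the central claim that all coefficients are fixed by the on-shell amplitude alone, with no off-shell data entering. The recurrence (\ref{30_0}) then follows at once by detaching the factor $[{\cal L}_\n-s(s-1)]/s$ from the product in (\ref{30}).

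For the equivalence with the multipole form (\ref{31})--(\ref{32}) I would use that the Legendre polynomials $P_j(\pm(\n\cdot\om))$ are eigenfunctions of ${\cal L}_\n$ (the operator of Eq.~(\ref{7})) with eigenvalue $j(j+1)$, as recorded in the proof of the Lemma. Acting with the product of (\ref{30}) on the partial-wave series (\ref{32_1}) for $f^\pm$ turns each factor into the number $j(j+1)-\mu(\mu-1)$, and the factorisation identity (\ref{28}) with $l\mapsto j$ collapses the product to $(j+s)!/(j-s)!$, giving precisely (\ref{32}); the same product annihilates all $P_j$ with $j<s$, since the factor $\mu=j+1$ then vanishes, so the surviving sum begins at $j=s$ automatically. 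The complementary route is to start from the multipole representation (\ref{31}) and insert the \emph{finite and exact} sum (\ref{24}) for each integer-order $\chi_j(\mp ikR)$, after which summing over $s$ before $j$ rearranges the double series (\ref{31}) into the power series (\ref{29}); this interchange is the content of the asserted equivalence.

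The hard part is to certify that these operations hold in the strict asymptotic sense \cite{olv} rather than merely formally. Two facts from the hypotheses are needed. The finite first absolute moment guarantees existence of $f^\pm$ through (\ref{32_2}), while the faster-than-any-power decay of $V$ forces the partial-wave coefficients $\eta^\pm_j(k)$ to fall off faster than any power of $1/j$; consequently every angular sum $\sum_{j\ge s}\frac{(j+s)!}{(j-s)!}(2j+1)\eta^\pm_j P_j$ converges, $f^\pm$ is smooth on the sphere, and each $h^\pm_s$ of Eq.~(\ref{30}) is finite. The real subtlety is the change of summation order itself: since (\ref{24}) is exact within each partial wave, the multipole representation (\ref{31}) actually \emph{converges} for $R$ beyond the support of $V$, whereas the reordered series (\ref{29}) is in general \emph{divergent}, repeated action of ${\cal L}_\n$ making the $h^\pm_s$ grow. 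The two thus coincide as one and the same function but differ in their mode of summation, which is exactly why (\ref{29}) retains only asymptotic meaning even when $V$ has finite support. I would secure this by truncating the $s$-sum at order $N$ and estimating the remainder ${\cal J}^\pm_\ka(\R)-\frac{e^{\pm ikR}}{R}\sum_{s=0}^N h^\pm_s/(\mp 2ikR)^s$ directly from the convergent multipole form: each partial wave is an exact polynomial of degree $j$ in $1/R$, so all $j\le N$ are captured in full and only the waves $j>N$ survive, where the leading leftover is the $s=N+1$ contribution. For $R$ large this tail is dominated by its first term, and the decay of $\eta^\pm_j(k)$ faster than any power bounds it by a constant multiple of the first omitted term, uniformly in $\n$. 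This transfers the asymptotic property of the single-partial-wave expansion to the reordered series and completes the argument.
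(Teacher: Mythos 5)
Your main construction coincides with the paper's: substitute the operator kernel of Lemma 1 into Eq.~(\ref{5_}), pull $\chi_{\Lambda_{\n}}$ through the $d^3{\rm x}$ integral so that it acts on the Fourier transform (\ref{32_2}), identify that integral with $-4\pi f^\pm$, and read off (\ref{30}) and (\ref{30_0}); the passage to (\ref{32}) via the eigenvalue $j(j+1)$ of ${\cal L}_{\n}$ on $P_j$ together with the identity (\ref{28}) is likewise the paper's route, and your remark that the operator product annihilates $P_j$ for $j<s$ is a correct (and useful) explanation of why the sum in (\ref{32}) starts at $j=s$.

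Two points need repair. First, for a potential of infinite support you dismiss the region $r>R$ as contributing ``beyond all orders by hypothesis''; the paper actually proves this, writing the two correction pieces (\ref{34_0}), (\ref{34}) and bounding them by $||\Psi||\,{\rm C}_{\rm N}/R^{{\rm N}-2}$ with the finite sup-norm $||\Psi||=\sup_{\xv}|\Psi^\pm_\ka(\xv)|$. This estimate is not decoration: it is what produces the Remark restricting the expansion to $s\leqslant[{\rm N}-3]$ for power-law tails, so some version of it must appear. Second, your closing remainder estimate overclaims. After reordering, the tail is the double sum $\sum_{j>N}\sum_{s=N+1}^{j}\frac{(j+s)!}{(j-s)!}\,\frac{(2j+1)\eta_j P_j}{s!(\mp 2ikR)^s}$, and since $(j+s)!/(j-s)!$ grows like $j^{2s}$ the inner sum is of order $\exp(c\,j^2/kR)$; even the exponential decay $|\eta_j|\sim e^{-\tau j}$ (which is what a Yukawa-type tail actually yields --- not decay faster than every power of $1/j$ in general) does not make the rearranged double series absolutely convergent, so ``dominated by its first term, uniformly in $\n$'' does not follow from the stated hypotheses. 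That non-convergence is exactly why (\ref{29}) is only asymptotic; the paper does not attempt your bound, but instead inherits the asymptotic property from the classical expansion of $K_\lambda$ underlying Lemma 1 in the finite-support case and from the arbitrariness of ${\rm N}$ in (\ref{34_1}) otherwise.
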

\begin {proof} 
Suppose at first the finite support for $V(r)$ at $r\leqslant a$. Then for $R>a$ we  
can directly substitute the expression (\ref{18_0}) into representation (\ref{5_}) for  
${\cal J}^\pm_\ka(\vec{\rm R})$ with the following result after interchange of the order 
of differentiation and integration for Fourier transformation (\ref{32_2}), what is justified 
\cite{olv} also for asymptotic series (\ref{18}): 
\begin{eqnarray}
&&\!\!\!\!\!\!\!\!\!\!\!\!\!\!\!\!\!
{\cal J}^\pm_\ka(\vec{\rm R})\simeq
\frac{{\chi}_{\Lambda_{\n}}(\mp ikR)}{R}\,f^\pm(k\n;\ka) \sim 
\label{33_0} \\
&&\!\!\!\!\!\!\!\!\!\!\!\!\!\!\!\!\!
\sim \frac{e^{\pm ikR}}{R}\left\{1+\sum^\infty_{s=1}\frac{ \displaystyle 
\prod\limits^s_{\mu=1}\left[{\cal L}_{\n}-\mu(\mu-1) \right]}
{s!(\mp 2ikR)^s}\right\}f^\pm(k\n;\ka).
\label{33}
\end{eqnarray}
This is exactly the asymptotic expansion (\ref{29}) with the coefficients $h^\pm_s(k\n;\ka)$  
defined by Eq. (\ref{30}). However, that is not the case for the potential $V(r)$ with 
infinite support. 
Estimating it for $r>R$ as $|V(r)|<{\rm C}_{\rm N}/r^{\rm N}$ with arbitrary finite 
${\rm N}\gg 1$, two pieces of correction that should be added may be easy estimated as:  
\begin{eqnarray}
&&\!\!\!\!\!\!\!\!\!\!\!\!\!\!\!\!\!
\Delta_R{\cal J}^\pm=-\int\limits_{r>R}\!d^3{\rm x}\,
\frac{e^{\pm ik|\vec{\rm R}-\xv|}}{4\pi|\vec{\rm R}-\xv|}\,V(r)\Psi^\pm_\ka(\xv),
\label{34_0} \\
&&\!\!\!\!\!\!\!\!\!\!\!\!\!\!\!\!\!
\Delta_R f^\pm= \frac{{\chi}_{\Lambda_{\n}}(\mp ikR)}{4\pi R}\! 
\int\limits_{r>R}\!d^3{\rm x}\,e^{\mp ik(\n\cdot\xv)}V(r)\Psi^\pm_\ka(\xv),
\label{34} \\
&&\!\!\!\!\!\!\!\!\!\!\!\!\!\!\!\!\!
|\Delta_R{\cal J}^\pm|<\frac{||\Psi|| {\rm C}_{\rm N}}{({\rm N}-2)R^{{\rm N}-2}},
\quad \;\;
|\Delta_R f^\pm|<\frac{||\Psi||{\rm C}_{\rm N}}{({\rm N}-3)R^{{\rm N}-2}}
\left[1+ O(R^{-1})\right], 
\label{34_1} \\
&&\!\!\!\!\!\!\!\!\!\!\!\!\!\!\!\!\!
\mbox{with the finite norm \cite{Ng,AR}: 
$||\Psi||=\underset{\xv}{\sup}|\Psi(\xv)|$ of functions }\;\Psi^\pm_\ka(\xv). 
\label{34_2}
\end{eqnarray}
Due to the arbitrariness of ${\rm N}\gg 1$ for these corrections the asymptotic expansion 
conserves its form (\ref{29}), (\ref{33}) but acquires additional asymptotic sense  
\cite{olv} compared with expansion (\ref{18}). 

Indeed, due to the partial wave decomposition (\ref{32_1}) of scattering amplitude 
$f^\pm(k\n;\ka)$, expression (\ref{33_0}) is the formal operator rewriting of   
asymptotic multipole expansion \cite{T,Ng,AR} (\ref{31}) of 
${\cal J}^\pm_\ka(\vec{\rm R})$. Unlike its exact expression given by Eq. (\ref{5_})  
the expansion (\ref{31}) according to Eqs. (\ref{26}), (\ref{24}) is a solution of free 
Schr\"odinger equation like Eq. (\ref{9}) with $R>0$. 
When $V(r)=0$ at $r>a$, the Schr\"odinger operators in (\ref{4_}) and  (\ref{9}) 
coincide for $R>a$. Then both asymptotic relations (\ref{31}), (\ref{33_0}) 
become exact expressions due to convergence of the expansion (\ref{31}) for $R>a$  
in the usual sense \cite{T,Ng,AR} similarly to expansions (\ref{19}), (\ref{20}). 
At the same time, the expansion (\ref{29}), i.e. (\ref{33}), conserves its asymptotic 
sense acquired according to Lemma 1. 

The assumed potential $V(r)$ for the case of infinite support has only finite effective 
radius \cite{Ng} and provides a slowdown fall \cite{Ng,AR} of partial waves at 
$j\to\infty$, e.g. like $|\eta_j(k)|\sim e^{-\tau j}$, $\tau>0$ for potential of Yukawa-type. 
This is enough for convergence of partial wave decompositions (\ref{32}), (\ref{32_1}) 
but can not provide convergence of the multipole expansion (\ref{31}) which now 
also acquires the asymptotical sense. 
Its infinite reordering (\ref{31}) $\mapsto$ (\ref{29}), (\ref{32}) given here 
simply ``displaces'' this asymptotic sense from the summation over angular momentum 
$j$ onto the always asymptotic expansion on integer powers $R^{-s}$ whose 
coefficients now are well-defined as derivatives (\ref{30}) of scattering 
amplitude with respect to $c=(\n\cdot\om)$, or as convergent partial wave decompositions 
(\ref{32}). Thus, all these coefficients are observable. 
\end {proof} 
\noindent
{\sl Remark.} From the estimations (\ref{34_0}) -- (\ref{34_2}) it is also clear 
\cite{T,Ng,AR} that even standard asymptotic (\ref{1_}) requires for the potential 
${\rm N}>3$ at least. More generally these estimations mean that for 
$|V(r)|\leqslant {\rm C}_{\rm N}/r^{\rm N}$ with $r\to\infty$ the asymptotic expansion 
(\ref{33}) is applied until $s\leqslant[{\rm N}-3]$. Thus, the further consideration is 
possible only for potentials $V(r)$, specified with the conditions of Theorem 1.

\section{Differential fluxes and unitarity relation}

To make a careful analysis of different fluxes, the following Lemma 2 is useful.
\begin {lemma} 
The function $e^{ikr[1-(\n\cdot\vs)]}$ as a distribution on the space of infinitely 
smooth functions ${\cal H}(\n)$ on the unit sphere $\n(\cos\vartheta,\varphi)$,  
parametrized by (\ref{4}), has the following exact operator representation for 
$c=(\n\cdot\vs)$. Let $\overline{\cal H}(c)$ be defined as:  
\begin{eqnarray}
&&\!\!\!\!\!\!\!\!\!\!\!\!\!\!\!\!\!
\overline{\cal H}(c)=\int\limits^{2\pi}_0\! d\varphi\, {\cal H}(\n(c,\varphi)),
\;\;\;\mbox{ then:}
\label {S_00} \\
&&\!\!\!\!\!\!\!\!\!\!\!\!\!\!\!\!\!
\int\!d\Omega(\n)\,e^{ikr[1-(\n\cdot\vs)]}\,{\cal H}(\n)\equiv 
\int\limits^1_{-1}\!dc\,e^{ikr(1-c)}\,\overline{\cal H}(c)=
\label {S_0} \\
&&\!\!\!\!\!\!\!\!\!\!\!\!\!\!\!\!\!
=\int\limits^1_{-1}\!dc\left[\delta(1-c)-e^{2ikr}\delta(1+c)\right]
\left(-ikr+\partial_c\right)^{-1}\overline{\cal H}(c).
\label {S_1}
\end{eqnarray}
\end {lemma}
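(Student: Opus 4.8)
The plan is to reduce the surface integral to a one-dimensional integral over $c=(\n\cdot\vs)$ and then to recognize the resulting oscillatory integral as a pure boundary (endpoint) contribution produced by a single integration by parts. First I would establish the elementary first equality (\ref{S_0}): choosing the polar axis along $\vs$ gives $(\n\cdot\vs)=\cos\vartheta=c$ and $d\Omega(\n)=dc\,d\varphi$, so that the kernel $e^{ikr[1-(\n\cdot\vs)]}=e^{ikr(1-c)}$ no longer depends on the azimuth and the integration over $\varphi$ produces exactly the marginal $\overline{\cal H}(c)$ defined in (\ref{S_00}). Since ${\cal H}(\n)$ is infinitely smooth, $\overline{\cal H}(c)$ is a smooth function on $[-1,1]$, which legitimizes all subsequent manipulations.

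For the nontrivial second equality (\ref{S_1}) I would introduce $g(c)=(-ikr+\partial_c)^{-1}\overline{\cal H}(c)$, i.e.\ the solution of the first-order equation $(\partial_c-ikr)g=\overline{\cal H}$, and substitute $\overline{\cal H}=(\partial_c-ikr)g$ into the one-dimensional integral. The mechanism is that the oscillatory kernel is annihilated by the formal transpose of $\partial_c-ikr$: from $\partial_c e^{ikr(1-c)}=-ikr\,e^{ikr(1-c)}$ one gets $(-\partial_c-ikr)e^{ikr(1-c)}=0$. Hence a single integration by parts carries the operator $\partial_c-ikr$ off $g$ onto the exponential, the resulting volume integral vanishes identically, and only the boundary term $\bigl[e^{ikr(1-c)}g(c)\bigr]_{-1}^1=g(1)-e^{2ikr}g(-1)$ survives. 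Identifying these endpoint values — $c=1$ the forward and $c=-1$ the backward direction relative to $\vs$ — with the $\delta(1-c)$ and $-e^{2ikr}\delta(1+c)$ contributions reproduces exactly the right-hand side of (\ref{S_1}).

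The one point deserving care, and the main obstacle, is the meaning of the formal inverse $(-ikr+\partial_c)^{-1}$, since the defining ODE fixes $g$ only up to a homogeneous solution $C\,e^{ikr c}$. I would verify that the identity is insensitive to this ambiguity: under $g\mapsto g+C\,e^{ikr c}$ the surviving boundary term changes by $C e^{ikr}-e^{2ikr}\,C e^{-ikr}=0$, because $e^{ikr(1-c)}e^{ikr c}=e^{ikr}$ is constant in $c$. Thus the representation (\ref{S_1}) holds for any admissible realization of the inverse operator. In practice the inverse may be generated as the asymptotic series $-(ikr)^{-1}\sum_{n\geqslant 0}(ikr)^{-n}\partial_c^{\,n}$, which organizes the answer as an expansion in inverse powers of $kr$ — the form convenient for the later flux analysis — but for the present exact distributional statement the integration-by-parts argument above is self-contained.
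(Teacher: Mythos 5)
Your proof is correct. The paper's own proof is a one‑line appeal to ``integration over $c$ by parts an infinite number of times,'' with $(-ikr+\partial_c)^{-1}$ understood as the formal Neumann series $-\sum_{n\geqslant 0}(ikr)^{-(n+1)}\partial_c^{\,n}$ generated by that iteration; each integration by parts peels off one boundary term at $c=\pm1$, and resumming the boundary contributions reproduces the two $\delta$‑function terms of (\ref{S_1}). You instead realize the inverse exactly, as a solution $g$ of the first‑order ODE $(\partial_c-ikr)g=\overline{\cal H}$, and perform a \emph{single} integration by parts, using that $e^{ikr(1-c)}$ is annihilated by the transposed operator so that only the endpoint values $g(1)-e^{2ikr}g(-1)$ survive. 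This buys two things the paper's argument lacks: the identity is established as an exact statement rather than as a resummed formal series, and you explicitly verify that the inherent ambiguity of the inverse (addition of $C e^{ikrc}$) drops out of the boundary combination — a well‑posedness point the paper silently ignores. Conversely, the paper's formal‑series realization is the one actually used downstream (it directly organizes the result as the $1/(kr)$ asymptotic expansion feeding Eqs. (\ref{S_11})--(\ref{S_16})), and you correctly note at the end that your exact construction specializes to that series, so the two treatments are compatible.
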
 
\begin {proof}
With $d\Omega(\n)=\sin\vartheta\,d\vartheta\, d\varphi=-dc\,d\varphi$ the result is 
obtained by using integration over $c$ by parts infinite number of times. The operator in 
Eq. (\ref{S_1}) has a sense of a formal series over powers of differential operator 
$\partial_c$. The well-known standard asymptotic relation \cite{GW,S,T,Ng} of the first 
order on $1/r$ corresponds here to $\partial_c\mapsto 0$. 
\end {proof}

Now let's consider the elementary flux of non-diagonal current 
$\jv_{\q,\ka}(\R)$ through a small element of spherical surface $\n R^2 d\Omega(\n)$, 
for $\R=R\n$, $\q=k\vs$, $\ka=k\om$, and $\overset{\leftrightarrow}{\vec{\nabla}}_{\R}=
\overset{\rightarrow}{\vec{\nabla}}_{\R}-\overset{\leftarrow}{\vec{\nabla}}_{\R}$, 
$\overset{\leftrightarrow}{\partial}_R=
\overset{\rightarrow}{\partial}_R-\overset{\leftarrow}{\partial}_R=
\bigl(\n\cdot\overset{\leftrightarrow}{\vec{\nabla}}_{\R}\bigr)$ according to 
(\ref{3}) -- (\ref{6}). Total flux through any closed surface is zero because the 
current is conserved \cite{Ng} due to Eq. (\ref{4_}): 
\begin{eqnarray}
&&\!\!\!\!\!\!\!\!\!\!\!\!\!\!\!\!\!
\jv_{\q,\ka}(\R)=\frac 1{2i}\left[\left(\Psi^+_\q(\R)\right)^*  
\overset{\leftrightarrow}{\vec{\nabla}}_{\R}\Psi^+_\ka(\R)\right], \qquad 
\left(\vec{\nabla}_{\R}\cdot\jv_{\q,\ka}(\R)\right)=0,
\label {S_2} \\
&&\!\!\!\!\!\!\!\!\!\!\!\!\!\!\!\!\!
R^2 d\Omega(\n)\left(\n\cdot\jv_{\q,\ka}(\R)\right)=R^2 d\Omega(\n)
\frac 1{2i}\left[\left(\Psi^+_\q(\R)\right)^*\overset{\leftrightarrow}{\partial}_R
\Psi^+_\ka(\R)\right]\longmapsto
\label {S_3} \\
&&\!\!\!\!\!\!\!\!\!\!\!\!\!\!\!\!\!
\longmapsto R^2 d\Omega(\n)\,\frac{k}2\, e^{ikR(\n\cdot(\om-\vs))}(\n\cdot(\om+\vs))+
\label {S_4} \\
&&\!\!\!\!\!\!\!\!\!\!\!\!\!\!\!\!\!
+\frac {d\Omega(\n)}{2i}
\left[\overset{*}{f}{}^+(k\n;k\vs)
\left(\chi_{\overset{\leftarrow}{\Lambda}{\!}_{\n}}(ikR)
\overset{\leftrightarrow}{\partial}_R 
\chi_{\overset{\rightarrow}{\Lambda}{\!}_{\n}}(-ikR)\right)f^+(k\n;k\om)\right]-
\label {S_5} \\
&&\!\!\!\!\!\!\!\!\!\!\!\!\!\!\!\!\!
-\frac {d\Omega(\n)}{2i}
\left\{\!\left(\!e^{ikR[1-(\n\cdot\vs)]}\,
e^z\!\left[z(\n\cdot\vs)+1-z\frac{\partial}{\partial z}\right]\!
\chi_{\overset{\rightarrow}{\Lambda}{\!}_{\n}}(z)f^+(k\n;k\om)\right)_{\!z\,=\,0-ikR}
\!\!\!\!\!\!\!-\right.
\label {S_6} \\
&&\!\!\!\!\!\!\!\!\!\!\!\!\!\!\!\!\!
\left. -\,\biggl(\vs \rightleftharpoons\om\biggr)^*\right\}.
\label {S_7}
\end{eqnarray}
Here, for sufficiently large $R$, the expressions (\ref{5_}) and (\ref{33_0}) for the 
wave function $\Psi^+_\ka(\R)$ were used. As well as in Eqs. (\ref{S_2}), (\ref{S_3}), 
the arrows point out the directions of action for the operators  
$\overset{\leftarrow}{\Lambda}{\!}_{\n}$ and $\overset{\rightarrow}{\Lambda}{\!}_{\n}$
from Lemma 1, that in fact are directions of action for the operators 
$\overset{\leftarrow}{\cal L}{\!}_{\n}$ and $\overset{\rightarrow}{\cal L}{\!}_{\n}$ 
(\ref{6}). Integration of separate terms over solid angle $d\Omega(\n)$ with fixed $R$ 
gives here the following interesting results. For the flux of incoming plane waves 
(\ref{S_4}), since $\left((\om-\vs)\cdot(\om+\vs)\right)=\om^2-\vs^2=0$, one has: \cite{Ng} 
\begin{equation}
R^2\int\!d\Omega(\n)\,\frac{k}2\, e^{ikR(\n\cdot(\om-\vs))}(\n\cdot(\om+\vs))=0. 
\label {S_8}
\end{equation}
For the flux (\ref{S_5}) we can ignore the arrows of $\Lambda_{\n}$ because operator 
${\cal L}{\!}_{\n}$ (\ref{7}) is self-adjoint on the unit sphere. So, the Wronskian 
(\ref{A_9}) leads to the total flux: 
\begin{eqnarray}
&&\!\!\!\!\!\!\!\!\!\!\!\!\!\!\!\!\!
\int\! \frac {d\Omega(\n)}{2i}\left[\overset{*}{f}{}^+(k\n;k\vs)
\left(\chi_{\overset{\leftarrow}{\Lambda}{\!}_{\n}}(ikR)
\overset{\leftrightarrow}{\partial}_R 
\chi_{\overset{\rightarrow}{\Lambda}{\!}_{\n}}(-ikR)\right)f^+(k\n;k\om)\right]=
\label {S_9}\\
&&\!\!\!\!\!\!\!\!\!\!\!\!\!\!\!\!\!
=k \int\! d\Omega(\n)\overset{*}{f}{}^+(k\n;k\vs)f^+(k\n;k\om).
\nonumber 
\end{eqnarray}
This is the total non-diagonal outgoing flux for finite $R$, obtained from the line 
(\ref{S_5}), now taking into account all possible asymptotic power corrections. 
Nevertheless, it looks exactly like right hand side of unitarity relation \cite{GW,S,T,Ng,AR}
independent of $R$. 
It is clear the same result may be obtained using the partial wave decomposition 
(\ref{32_1}) with the help of Eqs. (\ref{A_2}), (\ref{A_9}) (clf. (\ref{S_19_0}), 
(\ref{S_24}), (\ref{S_25}) below).   

The lines (\ref{S_6}), (\ref{S_7}) represent the non-diagonal interference ($\vs\neq\om$)
between incoming and outgoing fluxes. According to Lemma 2 for the first 
exponential of (\ref{S_6}), it takes place only in corresponding forward and backward 
directions. Note that any averaging over $R$ due to rapidly oscillating exponent 
$e^{2ikR}$ eliminates \cite{T} the contribution of backward direction in Eq. (\ref{S_1}). 
With this elimination and definition (\ref{S_00}), the line (\ref{S_6}) for 
$(\n\cdot\vs)=c$ gives: 
\begin{eqnarray}
&&\!\!\!\!\!\!\!\!\!\!\!\!\!\!\!\!\!
-\int\limits^{2\pi}_0\!\frac {d\varphi}{2i}\!\int\limits^1_{-1}\!dc\left(\delta(1-c) 
\frac{e^z}{(z+\partial_c)}\left[zc+1-z\frac{\partial}{\partial z}\right]
\chi_{\overset{\rightarrow}{\Lambda}{\!}_{\n}}(z)f^+(k\n;k\om)\right)_{z\,=\,0-ikR}.
\label {S_11} 
\end{eqnarray}
By moving the operator from denominator into the exponential for $z=0-ikR$:
\begin{equation}
\frac{e^z}{(z+\partial_c)}= \int\limits^\infty_0\!d\xi e^{z(1-\xi)-\xi\partial_c},
\label {S_12}
\end{equation}
after simple commutations, one obtains for Eq. (\ref{S_11}) 
\begin{eqnarray}
&&\!\!\!\!\!\!\!\!\!\!\!\!\!\!\!\!\!
=- \frac z{2i}\!\int\limits^{2\pi}_0\!d\varphi \!\int\limits^\infty_0\!d\xi \,
e^{-\xi\partial_c}\left[\chi_{\overset{\rightarrow}{\Lambda}{\!}_{\n}}(z)
\left(\overset{\leftrightarrow}{\partial}_z +\frac 1z \right)e^{z(1-\xi)}\right]
f^+(k\n;k\om)\biggr|^{c=1}_{z\,=\,0-ikR}, 
\label {S_13} 
\end{eqnarray}
with $c=1$ where possible. 
For the arbitrary term of partial wave decomposition (\ref{32_1}) the scattering 
amplitude here is effectively replaced by Legendre polynomial: 
$f^+(k\n;k\om)\mapsto P_j\left((\n\cdot\om)\right)$. This substitution immediately replaces 
$\chi_{\overset{\rightarrow}{\Lambda}{\!}_{\n}}(z)\mapsto \chi_j(z)$, thus 
permitting to make all remaining $\varphi$- integration and $\partial_
c$- differentiations in the closed form by using the relations (\ref{A_3}), (\ref{A_6}), 
(\ref{A_9}), wherefrom for: 
\begin{equation}
\int\limits^{2\pi}_0\!d\varphi P_j\left((\n\cdot\om)\right)=
2\pi P_j\left((\vs\cdot\om)\right) P_j(c), \qquad 
e^{-\xi\partial_c}P_j(c)\bigr|_{c=1}=P_j(1-\xi), 
\label {S_17} 
\end{equation}
it follows: 
\begin{eqnarray}
&&\!\!\!\!\!\!\!\!\!\!\!\!\!\!\!\!\!
-\frac z{2i}\!\int\limits^{2\pi}_0\!d\varphi\!\int\limits^\infty_0\!d\xi\, 
e^{-\xi\partial_c}\left[
\chi_j(z)\left(\overset{\leftrightarrow}{\partial}_z +\frac 1z \right)e^{z(1-\xi)}\right]
P_j\left((\n\cdot\om)\right)\biggr|^{c=1}_{z\,=\,0-ikR}=
\label {S_14} \\
&&\!\!\!\!\!\!\!\!\!\!\!\!\!\!\!\!\!
=-\,\frac {2\pi}{2i}\,z\,P_j\left((\vs\cdot\om)\right)\left[\chi_j(z)
\left(\overset{\leftrightarrow}{\partial}_z+\frac 1z \right)
\!\int\limits^\infty_0\!d\xi\, e^{z(1-\xi)}\,e^{-\xi\partial_c}P_j(c)
\right]^{c=1}_{z\,=\,0-ikR}\!\!=
\label {S_15} \\
&&\!\!\!\!\!\!\!\!\!\!\!\!\!\!\!\!\!
=-\,\frac {2\pi}{2i}\,z\,P_j\left((\vs\cdot\om)\right)\left[\chi_j(z)
\left(\overset{\leftrightarrow}{\partial}_z+\frac 1z \right)\frac{\chi_j(-z)}{z}\right]=
-\,\frac {4\pi}{2i}P_j\left((\vs\cdot\om)\right),   
\label {S_16} \\
&&\!\!\!\!\!\!\!\!\!\!\!\!\!\!\!\!\!
\mbox{or, equivalently: (\ref{S_11}) = (\ref{S_13})}=
-\,\frac {4\pi}{2i}f^+(k\vs;k\om).
\label {S_18}
\end{eqnarray}
Thus, the contribution of the lines (\ref{S_6}), (\ref{S_7}) into the full flux 
in accordance with the left hand side of unitarity condition \cite{GW,S,T,Ng,AR} becomes 
equal to: 
\begin{equation}
-\,\frac {4\pi}{2i}\left[f^+(k\vs;k\om)- \overset{*}{f}{}^+(k\om;k\vs)\right]=
-4\pi\,Im\,f^+(k\vs;k\om),  
\label {S_19}
\end{equation}
but now taking into account all the possible asymptotic power corrections:  
\begin{equation}
\frac{4\pi}k \,Im\,f^+(k\vs;k\om)=
\int\!d\Omega(\n)\overset{*}{f}{}^+(k\n;k\vs)f^+(k\n;k\om).  
\label {S_19_0}
\end{equation}
The diagonal case $\vs=\om$ in Eqs.  (\ref{S_9}), (\ref{S_19_0}) represents the optical 
theorem \cite{GW,S,T,Ng,AR} with total cross-section $\sigma$ of Eq. (\ref{3_}) 
in the right hand side and is not changed by these power corrections also. 
Moreover, since the operator of angular momentum square  (\ref{7}) depends for this case 
only on one variable: ${\cal L}_{\n}\mapsto \partial_c(c^2-1)\partial_c$, the result 
(\ref{S_18}) may be checked in first several orders of $R^{-s}$ directly from Eqs. 
(\ref{33_0}), (\ref{33}), (\ref{S_11}) on operator level. 

However, for finite $R$ the differential cross-section of Eq. (\ref{3_}) has to be replaced 
now by diagonal outgoing differential flux $\widehat{d\sigma}(R)$ (\ref{S_5}) normalized 
\cite{GW,S,T,Ng,AR} to the density $k$ of incoming flux (\ref{S_4}) for $\vs=\om$. It still 
contains asymptotic power corrections defined by Eqs. (\ref{33_0}), (\ref{33}) of Theorem 1: 
\begin{eqnarray}
&&\!\!\!\!\!\!\!\!\!\!\!\!\!\!\!\!\!
\frac{\widehat{d\sigma}(R)}{d\Omega(\n)}=\frac{1}{2ik} \left[\overset{*}{f}{}^+(k\n;k\om)
\left(\chi_{\overset{\leftarrow}{\Lambda}{\!}_{\n}}(ikR)
\overset{\leftrightarrow}{\partial}_R 
\chi_{\overset{\rightarrow}{\Lambda}{\!}_{\n}}(-ikR)\right)f^+(k\n;k\om)\right]= 
\label {S_20} \\
&&\!\!\!\!\!\!\!\!\!\!\!\!\!\!\!\!\!
=|f^+(k\n;k\om)|^2-\frac 1{kR}\,Im\left[\overset{*}{f}{}^+(k\n;k\om)
\overset{\rightarrow}{\cal L}{}_\n f^+(k\n;k\om)\right]+\frac 1{4(kR)^2}\cdot
\nonumber \\
&&\!\!\!\!\!\!\!\!\!\!\!\!\!\!\!\!\!
\cdot\left\{\left|\overset{\rightarrow}{\cal L}{}_\n f^+(k\n;k\om)\right|^2 -
Re\left[\overset{*}{f}{}^+(k\n;k\om)\overset{\rightarrow}{\cal L}{}^2_\n
f^+(k\n;k\om)\right]\right\}+O\left(\frac 1{R^3}\right).
\nonumber 
\end{eqnarray}
In terms of partial wave decomposition (\ref{32_1}) with corresponding phase 
shifts $\delta_j(k)$, for $c=(\n\cdot\om)$, $\eta_j(k)=\eta^+_j(k)$, 
$k\eta_j(k)=e^{i\delta_j(k)}\sin\delta_j(k)$, $\Delta_{jl}=j(j+1)-l(l+1)$, 
with the help of (\ref{A_10}) -- (\ref{A_8}), it reads: 
\begin{eqnarray}
&&\!\!\!\!\!\!\!\!\!\!\!\!\!\!\!\!\!
\frac{\widehat{d\sigma}(R)}{d\Omega(\n)}=
\sum^\infty_{l,j=0}(2l+1)(2j+1)\overset{*}{\eta}{}_l(k)\eta_j(k)
P_l(c)P_j(c)\,
\frac{(\chi_l(ikR)\overset{\leftrightarrow}{\partial}_R \chi_j(-ikR))}{2ik}, 
\label {S_24} \\ 
&&\!\!\!\!\!\!\!\!\!\!\!\!\!\!\!\!\!
\mbox{where: }\;
\frac{(\chi_l(ikR)\overset{\leftrightarrow}{\partial}_R \chi_j(-ikR))}{2ik}= 
1-\frac{\Delta_{jl}}{2ikR}-\frac{\Delta^2_{jl}}{8(kR)^2}+
O\left(\frac 1{R^3}\right). 
\label {S_25} 
\end{eqnarray}
The power corrections arising in (\ref{S_20}) or in this two-fold series for $j\neq l$  
in (\ref{S_25}), may be observable for slowly moving particles with $k\to 0$. 
They contain only real or imaginary parts of the products 
$\overset{*}{\eta}{}_l(k)\eta_j(k)$ and automatically disappear for $j=l$ in the total 
outgoing flux $\widehat{\sigma}$ being the total cross-section (\ref{3_}) now, 
or in the limit $R\to\infty$ for outgoing differential flux: 
\begin{equation}
\sigma = \widehat{\sigma} \equiv  
\int\!d\Omega(\n)\,\frac{\widehat{d\sigma}(R)}{d\Omega(\n)}, \qquad 
d\sigma=\lim_{R\to\infty}\widehat{d\sigma}(R). 
\label {S_25_0} 
\end{equation}
Since for real potential the Born approximation for amplitudes $f^+(k\n;k\om)$, 
$\eta_j(k)$ is real, \cite{GW,S,T,Ng} it is not enough to obtain the non zero first order 
correction $R^{-1}$ from Eqs. (\ref{S_20}) -- (\ref{S_25}). 
The relations (\ref{33_0}), (\ref{S_9}), 
(\ref{S_20}) -- (\ref{S_25_0}) and the observed disappearance of asymptotic corrections 
to Eq. (\ref{S_19_0}) are the main results of this work. 

\section{Identical particles with spin}    

In case of mutual scattering of identical Bose- or Fermi-particles \cite{GW,S,T} 
with spin $S$ one faces symmetrical or antisymmetrical scattering wave functions, 
amplitudes and respective cross-sections. The proper generalizations are 
straightforward, and instead of (\ref{5_}), (\ref{33_0}) one has: 
\begin{eqnarray}
&&\!\!\!\!\!\!\!\!\!\!\!\!\!\!\!\!\!
\Psi^+_{\ka(\pm)}(\R)\simeq  e^{i(\ka\cdot\R)}\pm e^{-i(\ka\cdot\R)}+ 
\frac{ \chi_{\overset{\rightarrow}{\Lambda}{\!}_{\n}}(-ikR) }{R}\,
{\rm F}^+_{(\pm)}(k\n;\ka), 
\label {S_26} \\ 
&&\!\!\!\!\!\!\!\!\!\!\!\!\!\!\!\!\!
\mbox {with: }\;
{\rm F}^+_{(\pm)}(k\n;\ka)=f^+(k\n;\ka)\pm f^+(-k\n;\ka),\;\; \mbox { and then:}
\label {S_27} \\ 
&&\!\!\!\!\!\!\!\!\!\!\!\!\!\!\!\!\!
\frac{\widehat{d\sigma}_{(\pm)}(R)}{d\Omega(\n)}=\frac{1}{2ik} 
\left[\overset{*}{{\rm F}}{}^+_{(\pm)}(k\n;k\om)
\left(\chi_{\overset{\leftarrow}{\Lambda}{\!}_{\n}}(ikR)
\overset{\leftrightarrow}{\partial}_R 
\chi_{\overset{\rightarrow}{\Lambda}{\!}_{\n}}(-ikR)\right)
{\rm F}^+_{(\pm)}(k\n;k\om)\right]. 
\label {S_28} 
\end{eqnarray}
Of course, the partial wave decompositions like (\ref{32_1}), (\ref{S_24}) contain now 
only even $j$ for ${\rm F}^+_{(+)}$ and only odd $j$ for ${\rm F}^+_{(-)}$. 
The normalized outgoing differential fluxes (\ref{S_28}) again have to replace 
corresponding differential cross-sections. For the scattering of nonpolarized identical 
particles the outgoing differential fluxes are defined by the usual way \cite{S} as: 
\begin{equation}
\widehat{d\sigma}_S(R)=
{\rm w}_{(+)}(S)\,\widehat{d\sigma}_{(+)}(R)+{\rm w}_{(-)}(S)\,\widehat{d\sigma}_{(-)}(R), 
\label {S_29}
\end{equation}
with the well known \cite{GW,S} probabilities ${\rm w}_{(\pm)}(S)$ for Bose- and 
Fermi-particles. Similarly (\ref{S_25_0}) integration of the flux (\ref{S_29}),  
(\ref{S_28}) over solid angle again obviously leads to the independent of $R$ total 
cross-section $\sigma_S$ for identical particles with spin $S$. 

\section{Conclusions}

As it is well-known, for a point-like, even anisotropic, stationary source of classical 
particles, or rays of light, or incompressible fluid, the radial flux of outgoing particles 
in a given solid angle does not depend on distance $R$ at all, due to the local 
conservation of classical current density. 

Turning to the wave picture, such independence is true only for the flux of pure spherical 
outgoing or incoming wave in Eq. (\ref{1_}) (see the first term in the right hand side of 
Eq. (\ref{S_20})).  That all results in well-known inverse-square law for event rate, 
which explicitly contains $1/R^2$ (see, e.g. Refs. \cite{N_shk,NN_shk}). 
A possible violation of this law is the subject of our interest. 
We show this violation is a pure wave effect, arising from nonsphericity of the 
exact scattering wave, i.e. from the next terms $R^{-s}$  $(s>1)$ of asymptotic expansion. 
The last is investigated here up to all orders, also using again the conservation of 
corresponding current.  

To this end, using the operator-valued asymptotic expansion for free Green function of  
Helmholtz equation, the asymptotic expansion for the wave function of potential 
scattering on inverse integer power of distance $R$ from scattering center is obtained. 
It is shown how these power corrections affect the definition of outgoing differential 
flux and interference flux.

Surprisingly, these power corrections precisely entirely disappear in total outgoing flux, 
unitarity relation, and optical theorem due to integration over solid angle at finite $R$. 
Thus, the applicability domain of these relations\footnote{Calculation similar to  
(\ref{S_11})--(\ref{S_16}) reveals exact disappearance of the contribution of backward 
direction in Eq. (\ref{S_1}) even without averaging over $R$ because 
$(\chi_j(z)\overset{\leftrightarrow}{\partial}_z\chi_j(z))\equiv 0$.} 
naturally extends to finite $R$ for fast ehough decrising potentials. 

It is worth to note that all obtained corrections are defined by observable on-shell 
amplitude or partial phase shifts. 
Nevertheless, the real observation of this dependence involves reevaluation of the phase 
shifts extracted earlier in fact from outgoing differential flux at finite $R$ 
(\ref{S_20}), (\ref{S_24}), (\ref{S_29}) without taking into account any corrections on the  
finite distance. 

Although asymptotic expansion by its nature has no sense as infinite sum, 
the obtained asymptotic expansions of the wave function and outgoing differential flux have 
a sense up to any finite order $s$ of $R^{-s}$ if potential $U(R)$ has a finite support or 
decreases for $R\to\infty$ faster than any power of $1/R$. Otherwise the maximal order 
$s$ of their validity is governed directly by the potential according to the Remark for  
Theorem 1. For example, the first two corrections, given by Eq. (\ref{S_20}), i.e.  
Eqs. (\ref{S_24}), (\ref{S_25}), may be applied to potentials with ${\rm N}>5$. 
Their disappearence on integration over solid angle in Eq. (\ref{S_25_0}) 
obviously takes place separatly in each order of $R^{-s}$.

So, following the authors of Ref.\cite {Grimus}, we come to conclusion, that   
``... the physics under consideration seems to comply naturally with the 
mathematical requirements''. 

\appendix 
\section{}  
The following relations for spherical functions and Legendre polynomials  
are necessary \cite{T,Ng,AR,b_e2,olv,vil,b_l} for $\n^2=\vs^2=\om^2=1$, with 
$\n(\cos\vartheta,\varphi)$, $\vs(\cos\beta,\alpha)$ parametrized by Eq. (\ref{4}) and 
Lemma 1, and $c$ may be equal to any of values 
$\cos\vartheta,\,(\n\cdot\vs),\,(\n\cdot\om)$:  
\begin{eqnarray}
&&\!\!\!\!\!\!\!\!\!\!\!\!\!\!\!\!\!\!\!\!\!\!\!\! 
\vec{L}^2_{\n}Y^m_l(\n)=l(l+1)Y^m_l(\n), \qquad
\vec{L}^2_{\n}P_l(c)=l(l+1)P_l(c),
\label{21} \\
&&\!\!\!\!\!\!\!\!\!\!\!\!\!\!\!\!\!\!\!\!\!\!\!\! 
\int\!d\Omega(\n)\, \overset{*}{Y}\!{}^m_l(\n)Y^{m^\prime}_j(\n)=
\delta_{lj}\delta_{mm^\prime}, \quad 
\sum^\infty_{l=0}\frac{(2l+1)}{4\pi}P_l\left((\n\cdot\vs)\right)=
\delta_\Omega(\n\,,\vs), 
\label{22_0} \\
&&\!\!\!\!\!\!\!\!\!\!\!\!\!\!\!\!\!\!\!\!\!\!\!\!  
\sum^l_{m=-l}\! Y^m_l(\n)\overset{*}{Y}\!{}^m_l(\vs) = 
\frac{(2l+1)}{4\pi}P_l\left((\n\cdot\vs)\right), \quad \; (-1)^lY^m_l(\n)=Y^m_l(-\n). 
\label{22} \\
&&\!\!\!\!\!\!\!\!\!\!\!\!\!\!\!\!\!\!\!\!\!\!\!\! 
Y^{m}_j(\vec{\rm e}_3)=i^j \sqrt{\frac{2j+1}{4\pi}}\,\delta_{m0}, \quad 
Y^{0}_j(\n)=i^j\sqrt{\frac{2j+1}{4\pi}}\,P_j({\rm n}_3=\cos\vartheta), 
\label{A_4} \\
&&\!\!\!\!\!\!\!\!\!\!\!\!\!\!\!\!\!\!\!\!\!\!\!\! 
\mbox {wherefrom: }\;
\int\!d\Omega(\n)\,P_l\left((\vs\cdot\n)\right)P_j\left((\n\cdot\om)\right)=
\frac{4\pi\,\delta_{lj}}{(2j+1)}P_j\left((\vs\cdot\om)\right), 
\label{A_2} \\
&&\!\!\!\!\!\!\!\!\!\!\!\!\!\!\!\!\!\!\!\!\!\!\!\! 
P_l\left(\cos\beta\right)P_l\left(\cos\vartheta\right)=
\frac 1{2\pi}\int\limits^{2\pi}_0\!d\varphi 
P_l\left(\cos\beta\cos\vartheta+\sin\beta\sin\vartheta\cos(\varphi-\alpha)\right),   
\label{A_3} \\
&&\!\!\!\!\!\!\!\!\!\!\!\!\!\!\!\!\!\!\!\!\!\!\!\! 
\mbox{and besides: }\;\; 
P_l(1-\xi)= \sum^l_{s=0}\frac{(l+s)!}{(l-s)!}\frac{(-\xi)^s}{(s!)^2 2^s}. 
\label{A_5} 
\end{eqnarray} 
Macdonald and Bessel functions\cite{b_e2,olv,vil} in definitions (\ref{26}), (\ref{24}) 
are defined by the relations, with $|{\rm arg}\,u-\beta_{1,2}|<\pi/2$, as:  
\begin{eqnarray}
&&\!\!\!\!\!\!\!\!\!\!\!\!\!\!\!\!\!\!\!\!
K_\lambda(u)=\frac 12 \int\limits^{\infty e^{-i\beta_1}}_{0 e^{i\beta_2}}
\frac{dt}{t}t^{\pm\lambda}\exp\left\{-\,\frac u2\left(t+\frac 1t\right)\right\}, 
\label{A_1} \\
&&\!\!\!\!\!\!\!\!\!\!\!\!\!\!\!\!\!\!\!\!
\psi_{l\,0}(kr) = \left(\frac{\pi kr}{2}\right)^{1/2}\!\!\!J_{l+{\frac 12}}(kr)
= \frac{1}{2i}\left[i^{-l}\chi_l(0-ikr)-i^l\chi_l(0+ikr)\right], 
\label{24_1} \\
&&\!\!\!\!\!\!\!\!\!\!\!\!\!\!\!\!\!\!\!\!
\mbox{with: }\; 
\int\limits^\infty_0\! dr\, \psi_{j\,0}(kr)\psi_{j\,0}(qr)=\frac{\pi}2\delta(q-k).  
\label{24_0} 
\end{eqnarray}
By making use of (\ref{24}) and (\ref{A_5}) for integer $l$ and $z=0-ikr$ one finds: 
\begin{equation}
\int\limits^\infty_0\!d\xi\,e^{z(1-\xi)}P_l(1-\xi)=\frac{\chi_l(-z)}z. 
\label{A_6}
\end{equation}
The following well-known expressions for Wronskians \cite{T,Ng,AR} $\forall\,j,\,l$
are used:
\begin{eqnarray}
&&\!\!\!\!\!\!\!\!\!\!\!\!\!\!\!\!\!\!\!\!
(\chi_j(ikR)\overset{\leftrightarrow}{\partial}_R \chi_j(-ikR) )=2ik, 
\,\;\mbox{ or: }\,\;
(\chi_j(z)\overset{\leftrightarrow}{\partial}_z\chi_j(-z))=2, 
\label{A_9} \\
&&\!\!\!\!\!\!\!\!\!\!\!\!\!\!\!\!\!\!\!\!
\frac{(\chi_l(ikR)\overset{\leftrightarrow}{\partial}_R \chi_j(-ikR))}{2ik}= 
1-\frac{\Delta_{jl}}{2ik}\int\limits^\infty_R\!\frac{dr}{r^2}\chi_l(ikr)\chi_j(-ikr)).
\label{A_10}
\end{eqnarray}
For the integral (\ref{A_10}) with integer $j,\,l$ the Eq. (\ref{24}) gives:   
\begin{eqnarray}
&&\!\!\!\!\!\!\!\!\!\!\!\!\!\!\!\!
1-\frac{\Delta_{jl}}{2ik}\int\limits^\infty_R\!\frac{dr}{r^2}\chi_l(ikr)\chi_j(-ikr))=
1+\Delta_{jl}\sum^{l+j}_{n=0}\frac{A_n(l,j)}{(n+1)(-2ikR)^{n+1}},
\label{A_7} \\
&&\!\!\!\!\!\!\!\!\!\!\!\!\!\!\!\!
A_n(l,j)=\sum^{\min(n,l)}_{s\,=\,\max(0,n-j)}\frac{(-1)^s}{s!(n-s)!}\frac{(l+s)!}{(l-s)!}
\frac{(j+n-s)!}{(j-n+s)!}.
\label{A_8} 
\end{eqnarray}


\section*{Acknowledgments}
Authors thank V. Naumov, D. Naumov, A. Rastegin and N. Ilyin for useful discussions 
and Reviewer for constructive comments and valuable advice.

\end{document}